\newtheorem{Theorem*}{Theorem}
\newtheorem{Claim*}[Theorem]{Claim}
\newtheorem{CounterExample*}{$\overline{\hbox{\bf Example}}$}
\newtheorem{Example*}[Theorem]{Example}
\newtheorem{Intuition*}[Theorem]{Intuition}
\newtheorem{Joke*}[Theorem]{Joke}
\newtheorem{Lemma*}[Theorem]{Lemma}
\newtheorem{Open problem}[Theorem]{Open problem}
\newtheorem{Question*}[Theorem]{Question}
\newtheorem{theorem}{Theorem}
\newtheorem{lemma}[theorem]{Lemma}
\newtheorem{definition}[theorem]{Definition}
\def \bSubexa    {\begin{subexa}}
\newcommand{\ignore}[1]{}
\newcommand{\EE}{\mathbb{E}}
\newcommand{\RR}{\mathbb{R}}
\def \cH     {{\cal H}}
\newcommand{\Var}{{\rm Var}}
\def\ignore#1{}
\newcommand{\bi}{\begin{itemize}}
\newcommand{\ei}{\end{itemize}}
\def\orpro{\mathop{\mathchoice
   {\vee\kern-.49em\raise.7ex\hbox{$\cdot$}\kern.4em}
   {\vee\kern-.45em\raise.63ex\hbox{$\cdot$}\kern.2em}
   {\vee\kern-.4em\raise.3ex\hbox{$\cdot$}\kern.1em}
   {\vee\kern-.35em\raise2.2ex\hbox{$\cdot$}\kern.1em}}\limits}
\def\andpro{\mathop{\mathchoice
 {\wedge\kern-.46em\lower.69ex\hbox{$\cdot$}\kern.3em}
 {\wedge\kern-.46em\lower.58ex\hbox{$\cdot$}\kern.25em}
 {\wedge\kern-.38em\lower.5ex\hbox{$\cdot$}\kern.1em}
 {\wedge\kern-.3em\lower.5ex\hbox{$\cdot$}\kern.1em}}\limits}
\def\simge{\mathrel{%
   \rlap{\raise 0.511ex \hbox{$>$}}{\lower 0.511ex \hbox{$\sim$}}}}
\def\simle{\mathrel{
   \rlap{\raise 0.511ex \hbox{$<$}}{\lower 0.511ex \hbox{$\sim$}}}}
\providecommand{\email}[1]{\href{mailto:#1}{\nolinkurl{#1}\xspace}}
\newcommand{\eps}{\varepsilon}
\DeclareMathOperator*{\argmax}{arg\,max}
\newcommand{\drop}{U}
\newcommand{\phase}{V}
\newcommand{\stsaw}{Y}
\newcommand{\saw}{X}
\newcommand{\rphase}{R}
\newcommand{\indep}{\perp \!\!\! \perp}
\newcommand{\eqdef}{\stackrel{\text{def}}{=}}
\newcommand{\thresh}{T}
\newcommand{\proj}{Z}
\newcommand{\projDC}{Z_{DC}}
\newcommand{\projAC}{Z_{AC}}
\newcommand{\lebtwo}{L^2\left[0,1\right]}
\newcommand{\acorr}{K}
\newcommand{\coeff}{G}
\newcommand{\gcoeff}{c}
\newcommand{\eigfunc}{\psi}
\newcommand{\eigval}{\lambda}
\newcommand{\var}[1]{\text{Var}\left(#1\right)}
\newcommand{\amen}{\zeta}
\newcommand{\vdrop}{\text{Vardrop}}
\newcommand{\hil}{\cH}
\newcommand{\qcell}{A}
\newcommand{\rec}{\hat{x}}
\newcommand{\modproj}{\widetilde{\proj}}
\newcommand{\vthresh}{w}
\newcommand{\sawDC}{U_{DC}}
\def\BibTeX{{\rm B\kern-.05em{\sc i\kern-.025em b}\kern-.08em
    T\kern-.1667em\lower.7ex\hbox{E}\kern-.125emX}}
\newcommand*{\EXTENDED}{}%
\begin{document}

\title{On One-Bit Quantization\\
{\footnotesize 
\thanks{This
research was supported by the
US National Science Foundation under grants
 CCF-2008266 and CCF-1934985, by the US Army
 Research Office under grant W911NF-18-1-0426,
and by a gift from Google.
}}}


\author{\IEEEauthorblockN{Sourbh Bhadane and Aaron B. Wagner} 
	\IEEEauthorblockA{\textit{School of Electrical and Computer Engineering} \\
		\textit{Cornell University}\\
		Ithaca, NY 14853 USA\\
	Email: \{snb62,wagner\}@cornell.edu}
%
}

\maketitle

\begin{abstract}
    We consider the one-bit quantizer that minimizes the mean squared error for a source living in a real Hilbert space. The optimal  quantizer is a projection followed by a thresholding operation, and we provide methods for identifying the optimal direction along
    which to project. As an application of our methods, we characterize the optimal one-bit quantizer for a continuous-time
    random process that exhibits low-dimensional structure. We numerically show that this optimal quantizer is found by
    a neural-network-based compressor trained via stochastic gradient descent.
\end{abstract}

\begin{IEEEkeywords}
one-bit quantizers, compression, neural networks.
\end{IEEEkeywords}

\section{Introduction}\label{sec:intro}
The classical theory of lossy compression is based on the analysis
of stationary Gaussian sources with a mean-squared error distortion measure.
Standard results stipulate that a near-optimal method for compressing
such sources is to apply a linear whitening transform, followed
by a uniform quantizer, follwed by entropy coding~\cite[Sec.~5.5]{Pearlman:Said}. 
This is indeed the approached adopted by various practical compression
standards.

Recently, lossy compression methods based on Artificial Neural
Networks have begun to outperform those that use the classical
approach for images~(e.g.,~\cite{BalleCMSJAHT20}) and other 
sources~. Since the classical approach is provably near-optimal for Gaussian
sources, ANN-based methods are evidently able to exploit
non-Gaussianity in practical sources of interest. This calls
for a shift away from Gaussian sources toward ones that can
better explain the performance of ANN-based codes (e.g.~\cite{WagnerB21}).

Analyzing such models can be challenging (with~\cite{WagnerB21} being
a notable exception), leading one to focus on high-rate
and low-rate regimes. In this paper, we focus on the latter, 
specifically the characterization of the optimal one-bit 
quantizer for a given source under mean squared error (MSE).

Despite the simplicitly with which this problem can be stated,
relatively little is known about it. For log-concave densities,
there exists a unique locally optimal quantizer, which can
be found using the Lloyd-Max 
algorithm~\cite{Fleischer64,Trushkin82,Kieffer83,Max60,Lloyd82}. 
For sources with a density of the form $f(x) = g(x^T K x)$, where 
$g(\cdot)$ is decreasing and $K$ is positive semidefinite,
Magnani \emph{et al.}~\cite{MagnaniGG05} show that the optimal 
reconstructions lie on the major axis of the ellipsoid 
associated with $K$. On the other hand, it is known that
the optimal quantizer is not necessarily symmetric about
0 even if the distribution itself is. Consider the
distribution that is uniformly distributed across
the three points $\{-1,0,1\}$. It is elementary
to check that the best symmetric quantizer is outperformed
by one that maps to the closest 
reconstruction among the set $\{-1,1/2\}$. See
Abaya and Wise~\cite{Abaya:Counter} for an earlier example
that is continuous and monotonically decreasing
(cf.~\cite{MagnaniGG05}).

We develop results toward a general theory of
optimal one-bit quantization. Any optimal one-bit
quantizer can evidently be implemented via a projection
operation followed by a thresholding. We follow
Magnani \emph{et al.} in the sense that we focus on
identifying the best direction in which to project;
once this is identified, the optimal threshold can
be found by a one-dimensional sweep. The optimal
direction is controlled by a tension between the
variance of the projected source and its ``amenability'' 
to one-bit quantization. On the one hand, 
quantizing high-variance directions results
in a larger variance drop, i.e., a lower MSE.
On the other hand, for a given variance,
some distributions result in a lower variance
drop under one-bit quantization than others
(consider, for example, a standard Normal
versus the uniform distribution on 
$\{-1,1\}$; see~\cite{WagnerB21} for a naturally-occurring
example). We provide methods for resolving
this tension, which we demonstrate on an example
random process called the \emph{stationary sawbridge}.
For this infinite-dimensional process we characterize
the optimal one-bit quantizer. Moreover, we show
that it is found by an off-the-shelf ANN compressor
trained via stochastic gradient descent (SGD).

Most of the prior work on optimal one-bit quantizers
focuses on communication instead of compression~\cite{WangFLCL18,MoH15}.
There the objective is to maximize mutual information
or bit error rate instead of MSE. Nonetheless, the 
methods in this paper may have some utility in that
application.

\section{Preliminaries}\label{sec:prelim}

Let $\hil$ be a real Hilbert space with a countable basis and let $X$ be a random variable in $\hil$. Without loss of generality, we assume throughout that $\EE\left[X\right] = 0$ by which we mean $\EE\left[ \left \langle q,X\right \rangle \right] = 0$ for all $q$ such that $\left \lvert q \right \rvert = 1$, and $\EE\left[ \left \lVert X\right \rVert^2 \right] < \infty$.	
	\begin{definition}
		 A one-bit quantizer is an encoder $f:\hil \mapsto \{ 0,1\}$ and a decoder $g:\{0,1\} \mapsto \hil$. We denote the quantization cells by 
		 \[ \qcell_j = f^{-1}(j)  \hspace{5mm} j \in \{0,1\} , \]
and the reconstructions by 
\[  \rec_j = g(j) \hspace{5mm} j \in \{0,1\}.  \]
We will use $Q$ to 	refer to both $(f,g)$ and $g \circ f$.  $Q$ is said to be a symmetric one-bit quantizer if $\hat{x}_0 = -\hat{x}_1$.
	\end{definition}

We focus on mean-squared error (MSE) as a performance metric. We define the difference between the variance and the infimum of the mean-squared error over all one-bit quantizers as the \emph{variance drop} of a source. 
\begin{definition}
\[	\vdrop_X \eqdef \EE\left[ \left \lVert X \right \rVert^2 \right] - \inf_Q \EE\left[ \left \lVert X - Q(X) \right \rVert^2 \right]. \]
\end{definition}

We will require the notions of symmetric real-valued random variables and log-concave probability density functions (pdf) in the rest of the paper. 

\begin{definition}
	A real-valued random variable $X$ is symmetric if $X$ and $-X$ have the same distribution.
\end{definition}

\begin{definition}
	A probability density function $f:\RR \mapsto \RR_+$ is log-concave if there exists a concave function $\phi:\RR \mapsto \left[-\infty,\infty\right)$ such that for all $x \in \RR$, $f(x) = e^{\phi(x)}$.
\end{definition}

\section{General Methods}\label{sec:amen} 
The decision boundary of an optimal one-bit quantizer of a random vector is a hyperplane that is normal to the line joining the two reconstructions. Thus one-bit quantization of a random vector can be reduced to projecting the random vector along a direction and thresholding the projection. It would seem natural to project along the direction with the highest variance. 
Yet, as noted in the introduction, lower variance directions might be preferred if they are more amenable
to one-bit quantization. We begin by making this tension precise.
%


\begin{definition}
    The \emph{amenability (to one-bit quantization)} of a real-valued, zero-mean random variable $X$ is defined as $\amen_X \eqdef \frac{\EE\left[ \left \lvert X \right \rvert \right]^2}{\EE\left[X^2\right]}$.
\end{definition}

We note two formal properties of amenability before connecting the
concept to quantization:
\begin{enumerate}
	\item \textbf{Scale-free}: $\amen_X = \amen_{aX}$ for nonzero $a \in \RR$. 
	\item \textbf{Bounded}: $0 \leq \amen_X \leq 1$ where the right hand side inequality follows from Cauchy Schwarz. Both extremes are approachable by distributions with uniformly bounded support. For $X \sim \text{Unif} \{ -1,1\}$, $\amen_X =1$. For the lower limit, consider $X_{\eps,\delta}$ with probability mass function 
	\[ p_{X_{\eps,\delta}}(\pm 1) = \delta, p_{X_{\eps,\delta}}(\pm \eps) = 0.5 - \delta.  \]
	It can be verified that 
	\[ \amen_{X_{\eps,\delta}} = \frac{\EE\left[ \left \lvert X_{\eps,\delta} \right \rvert \right]^2}{\Var_{X_{\eps,\delta}}} = \frac{\left((1-2\delta)\eps + 2\delta\right)^2}{(1-2\delta)\eps^2 + 2\delta}. \]
	Finally, $\lim\limits_{\delta \mapsto 0}\left( \lim\limits_{\eps \mapsto 0} \amen_{X_{\eps,\delta}} \right) = 0$.
\end{enumerate}

The amenability of a few standard distributions whose mean is $0$ is given in Table~\ref{table:amen}. 

\begin{table}[htbp]
	\caption{Amenability of some standard distributions.}
	\begin{center}
		\begin{tabular}{|c|c|}
			\hline
			\textbf{Distribution}& \textbf{Amenability} \\
			\hline
			Unif & $3/4$ \\
			\hline
			Unif*Unif& $2/3$\\ 
			\hline
			Gaussian& $2/ \pi$  \\
			\hline
			Laplacian& $1/2$\\
			\hline
		\end{tabular}
		\label{table:amen}
	\end{center}
\end{table}

%



 A key stepping stone for our main theorem is the relation between  the variance drop of any zero-mean random variable whose optimal one-bit quantizer is symmetric, to its amenability. Note that this relation holds in particular for a symmetric random variable whose pdf is log-concave, since its optimal one-bit quantizer is known to be symmetric~\cite{Kieffer83}.

\begin{lemma}\label{lem:vardrop}
	Let $W$ be a zero-mean, real-valued random variable with a density. Then 
	\begin{enumerate}
		\item 
		\begin{equation} \label{eq:vardrop_nonopti}
 \vdrop_W = \sup\limits_{\vthresh} \left(\EE\left[W \mid W > \vthresh \right]\right)^2 \frac{\Pr\left(W\geq \vthresh\right)}{\Pr\left(W< \vthresh \right)}.
		\end{equation}
		\item Further if $\Pr\left(W \geq 0\right) = \Pr\left(W<0\right) = \frac{1}{2}$ and if an optimal one-bit quantizer is symmetric then 
		\[ \vdrop_W = \EE\left[ \left \lvert W \right \rvert \right]^2 = \amen_W \Var_W. \]
	\end{enumerate}
%
\end{lemma}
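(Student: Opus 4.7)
The plan is to reduce a real-valued one-bit quantizer to a single threshold parameter $w$ plus two reconstruction levels, eliminate the reconstructions via the MMSE property, and then perform a short algebraic simplification exploiting $\EE[W]=0$.

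First, for any pair of reconstructions $\hat{x}_0 < \hat{x}_1$, the MSE-minimizing encoder is the nearest-neighbor rule, which partitions $\RR$ into the two half-lines separated by the midpoint $w=(\hat{x}_0+\hat{x}_1)/2$. So without loss of generality every one-bit quantizer of $W$ is parameterized by a threshold $w \in \RR$. Second, for a fixed $w$, the MSE-minimizing reconstructions are the conditional means $\mu_0(w)=\EE[W\mid W<w]$ and $\mu_1(w)=\EE[W\mid W\geq w]$, by the standard MMSE property. Substituting and expanding, the variance drop at threshold $w$ reduces to $\Pr(W<w)\,\mu_0(w)^2 + \Pr(W\geq w)\,\mu_1(w)^2$.

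The algebraic heart of the argument then uses $\EE[W]=0$, which implies $\Pr(W<w)\,\mu_0(w) + \Pr(W\geq w)\,\mu_1(w)=0$. Solving for $\mu_0(w)$ and substituting eliminates it, yielding
\[
\vdrop_W(w) \;=\; \mu_1(w)^2\,\frac{\Pr(W\geq w)}{\Pr(W<w)}
\]
after a one-line simplification. Taking the supremum over $w$ proves part~(1).

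Part~(2) is then a direct corollary. The hypothesis that some optimal quantizer is symmetric, $\hat{x}_0=-\hat{x}_1$, forces its decision boundary (and hence its threshold in the parameterization above) to sit at $0$. Combined with $\Pr(W\geq 0)=\Pr(W<0)=1/2$, evaluating the part~(1) formula at $w=0$ gives $\vdrop_W=\mu_1(0)^2$, and the zero-mean plus half-probability assumptions together yield $\mu_1(0)=\EE[W\mid W\geq 0]=\EE[|W|]$. The final equality $\EE[|W|]^2=\amen_W\Var_W$ is immediate from the definition of $\amen_W$ together with the identity $\Var_W=\EE[W^2]$ for zero-mean $W$.

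The main thing to be careful about is bookkeeping rather than a deep obstacle: the supremum in part~(1) is implicitly over thresholds $w$ for which both cells have positive probability (else $\mu_1(w)$ or $\mu_0(w)$ is undefined), but such degenerate thresholds correspond to trivial quantizers whose variance drop is $0$ and so can be safely excluded. A secondary check is that the reduction to nearest-neighbor encoders genuinely applies to arbitrary measurable one-bit encoders, which is the standard Lloyd--Max observation and should require only a sentence.
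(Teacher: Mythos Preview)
Your proposal is correct and follows essentially the same approach as the paper: parameterize by a threshold $w$, use the Lloyd--Max conditional-mean reconstructions, expand the MSE, and exploit $\EE[W]=0$ to eliminate one conditional mean in favor of the other. Your write-up is slightly more explicit than the paper's in justifying the reduction to half-line cells via the nearest-neighbor rule and in flagging the degenerate-threshold edge case, but the substance is identical.
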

\begin{proof}
	Let the quantization cells be $\left( -\infty, \vthresh \right)$ and $\left[\vthresh, \infty\right)$ where $\vthresh \in \RR$. By Lloyd's conditions for local optimality the reconstructions are $\EE\left[ W \mid W < \vthresh \right]$ and $\EE\left[ W \mid W \geq \vthresh \right]$. Therefore the mean-squared error is 
	\begin{align} 
	&\Pr\left( W \geq \vthresh \right) \EE\left[ \left(W - \EE\left[ W \mid W \geq \vthresh \right] \right)^2 \mid W \geq \vthresh \right] \nonumber\\
	&+ \Pr\left(W < \vthresh \right)\EE\left[ \left( W - \EE\left[ W \mid W < \vthresh \right] \right)^2 \mid W < \vthresh \right] \nonumber \\
	&=\EE\left[ W^2 \right] - \Pr\left(W \geq \vthresh \right) \left(\EE\left[ W \mid W \geq \vthresh \right]\right)^2 \nonumber \\
	& \hspace{10mm} -  \Pr\left(W < \vthresh \right) \left(\EE\left[ W \mid W < \vthresh \right]\right)^2 \label{eq:mse_nonsym}
	\end{align}
Since $W$ is zero-mean,
 \begin{align*}
 	&\Pr\left(W\geq \vthresh \right) \EE\left[W \mid W \geq \vthresh \right] \\
 	& \hspace{5mm}+ \Pr\left(W < \vthresh \right) \EE\left[W \mid W < \vthresh \right] = 0.
 \end{align*}
\[\EE\left[W \mid W < \vthresh \right] = -\frac{\Pr\left(W\geq \vthresh \right) \EE\left[W \mid W \geq \vthresh \right]}{\Pr\left(W < \vthresh  \right)}.\] 
Substituting this in \eqref{eq:mse_nonsym} and simplifying, we get 
\[ \vdrop_W = \sup\limits_{\vthresh } \left(\EE\left[W \mid W \geq \vthresh  \right] \right)^2 \frac{\Pr\left(W \geq \vthresh \right)}{\Pr\left(W < \vthresh  \right)}. \]
%
%

When an optimal quantizer is symmetric, we can choose $\vthresh =0$. Therefore, 
\[ \vdrop_W =\EE\left[W \mid W \geq 0 \right]^2 =  \EE\left[ \left \lvert W \right \rvert \right]^2 = \amen_W \Var_W.\]
\end{proof}

We now consider the general problem of one-bit quantization of random variables in Hilbert space. We first show that the variance drop of a random variable in Hilbert space is the supremum of the variance drop of its projection over all directions. If the projection is symmetric and log-concave for every direction then using Lemma~\ref{lem:vardrop}, the variance drop of the projection can be related to its amenability. 

\begin{theorem}\label{thm:amen}
	Let $\hil$ be a Hilbert space with a countable basis and let $X$ be a zero-mean, finite variance random variable in $\hil$. The following are true. 
	\begin{enumerate}[label=(\alph*)]
		\item $\vdrop_{X} = \sup\limits_{q \in \hil, \left \lVert q \right \rVert = 1} \vdrop_{\left \langle X,q \right \rangle}$
		\item If $\left \langle X ,q \right \rangle$ is symmetric and log-concave for all $q$, then 
		\[ \vdrop_X = \sup\limits_{q \in \hil, \left \lVert q \right \rVert = 1} \amen_{\left \langle X,q\right \rangle} \Var_{\left \langle X,q\right \rangle}.\]
	\end{enumerate}
\end{theorem}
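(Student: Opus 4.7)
Part~(b) is a direct corollary of part~(a): for each unit $q$, the hypothesis ensures $\langle X,q\rangle$ is symmetric with a log-concave density, so $\Prb(\langle X,q\rangle\ge 0)=1/2$ and, by~\cite{Kieffer83}, its optimal one-bit quantizer is symmetric; Lemma~\ref{lem:vardrop}(2) then substitutes $\vdrop_{\langle X,q\rangle}=\amen_{\langle X,q\rangle}\Var_{\langle X,q\rangle}$ into the identity from~(a). The bulk of the work is part~(a), which I plan to prove by establishing each inequality separately.

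\textbf{Step 1 ($\vdrop_X\ge\sup_q\vdrop_{\langle X,q\rangle}$).} Fix a unit $q$ and a threshold $w\in\RR$, and lift the scalar quantizer of $Y_q\eqdef\langle X,q\rangle$ with cells $\{Y_q<w\}$, $\{Y_q\ge w\}$ to a one-bit $\hil$-quantizer whose cells are the corresponding half-spaces $A_0,A_1$ and whose reconstructions are the centroids $m_j\eqdef\EE[X\mid A_j]$. Because the centroid is MSE-optimal, this quantizer's variance drop equals $\sum_j p_j\|m_j\|^2$, with $p_j=\Prb(A_j)$. Since $\|m_j\|^2\ge\langle m_j,q\rangle^2=(\EE[Y_q\mid A_j])^2$, this dominates the scalar variance drop that threshold $w$ achieves for $Y_q$. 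Taking the supremum over $w$ and invoking Lemma~\ref{lem:vardrop}(1), then supremum over $q$, gives the inequality.

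\textbf{Step 2 ($\vdrop_X\le\sup_q\vdrop_{\langle X,q\rangle}$).} Start from an arbitrary one-bit quantizer of $X$ with cells $A_0,A_1$; replacing its reconstructions with the centroids $m_j=\EE[X\mid A_j]$ only decreases MSE, so assume this is done. Exploiting $p_0m_0+p_1m_1=\EE[X]=0$, a routine computation shows the resulting variance drop equals $p_1\|m_1\|^2/p_0$. If $m_1=0$ there is nothing to prove; otherwise set $q=m_1/\|m_1\|$, so $\|m_1\|=\EE[Y_q\mid A_1]$ and the variance drop becomes $(\EE[Y_q\mid A_1])^2\,p_1/p_0$. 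Introducing the ``soft threshold'' $\pi(y)\eqdef\Prb(A_1\mid Y_q=y)\in[0,1]$, this expression equals
\[
\frac{(\EE[Y_q\,\pi(Y_q)])^2}{\EE[\pi(Y_q)]\,(1-\EE[\pi(Y_q)])}.
\]
A bathtub-type rearrangement argument---for fixed mass $\alpha=\EE[\pi(Y_q)]$, the quantity $|\EE[Y_q\,\pi(Y_q)]|$ over $\pi\in[0,1]$ is maximized by the indicator of a tail $\{Y_q\ge w_\alpha\}$ or $\{Y_q\le w_\alpha'\}$---identifies the supremum of the above ratio with the supremum over hard scalar thresholds, which by Lemma~\ref{lem:vardrop}(1) equals $\vdrop_{Y_q}$. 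Thus the variance drop of the original $\hil$-quantizer is at most $\vdrop_{\langle X,q\rangle}\le\sup_q\vdrop_{\langle X,q\rangle}$.

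\textbf{Main obstacle.} The crucial non-obvious ingredient is the rearrangement step in Step~2: an arbitrary cell $A_1\subseteq\hil$ projects onto $Y_q$ as a possibly-fractional ``soft'' threshold, and we must show it cannot do better than a hard scalar threshold. Intuitively, for fixed conditional mass the conditional mean is largest in absolute value when $\pi$ concentrates on a tail of $Y_q$. Once this is handled, the remainder is bookkeeping around Lloyd's centroid condition and the algebra, using $\EE[X]=0$, that reduces a two-point reconstruction in $\hil$ to a scalar comparison along the line through the centroids.
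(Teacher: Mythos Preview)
Your plan is correct, and your handling of part~(b) matches the paper's. Step~1 is essentially the paper's lower-bound argument (lift a scalar quantizer to~$\hil$). Step~2 reaches the same conclusion by a genuinely different route. The paper does not pass to centroids: for an arbitrary quantizer $Q$ it sets $q=(\hat x_1-\hat x_0)/\|\hat x_1-\hat x_0\|$, completes to an orthonormal basis $\{q,b_1,b_2,\dots\}$, and observes that $Q(X)-\bar q$ (with $\bar q=p_0\hat x_0+p_1\hat x_1$) is always a scalar multiple of $q$, so each $\langle Q(X),b_i\rangle$ is \emph{constant}. The coordinates orthogonal to $q$ therefore contribute nonpositively to the variance drop, leaving only the $q$-direction, and the final inequality $\le\vdrop_{\langle X,q\rangle}$ follows (implicitly) from the nearest-neighbor condition: for a fixed reconstruction pair the optimal encoder is a threshold on $\langle X,q\rangle$ alone. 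Your bathtub rearrangement is proving exactly this nearest-neighbor fact, just more laboriously; the paper's orthogonal decomposition sidesteps the soft threshold $\pi$ entirely. What your approach buys is that the ``side information doesn't help the encoder'' step is made fully explicit, whereas the paper leaves it as a one-line implicit appeal.

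One small caveat: in both steps you invoke Lemma~\ref{lem:vardrop}(1), which is stated for real variables with a density, an assumption absent from part~(a). The patch is immediate---in Step~1 lift an arbitrary scalar one-bit quantizer rather than only a threshold, and in Step~2 you only need that the hard-threshold supremum is $\le\vdrop_{Y_q}$, which is definitional---but as written this is a minor gap that the paper's argument does not encounter.
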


\begin{proof}

(a) Let $Q$ be any one-bit quantizer. Define $q \eqdef \frac{\hat{x}_1 - \hat{x}_0}{\left \lVert \hat{x}_1 - \hat{x}_0 \right \rVert}$. Let $\left \lbrace q, b_1, b_2 \cdots  \right \rbrace$ be an orthonormal basis for $\hil$. Then 
\begin{align}
	 &\EE\left[ \left \lVert X \right \rVert^2\right] - \EE\left[ \left \lVert X - Q(X) \right \rVert^2\right] \nonumber \\
	 &= \EE\left[ \left \langle X,q\right \rangle^2 \right] + \sum\limits_{i=1}^{\infty} \EE\left[ \left \langle X,b_i\right \rangle^2 \right] \nonumber\\
	 &\hspace{5mm}- \EE\left[ \left \lVert \left \langle X,q\right \rangle q + \sum\limits_{i=1}^{\infty} \left \langle X, b_i \right \rangle b_i \right. \right. \nonumber\\
	 & \hspace{10mm}\left. \left. - \left \langle Q(X),q\right \rangle q - \sum\limits_{i=1}^{\infty} \left \langle Q(X),b_i \right \rangle b_i  \right \rVert^2 \right]. \label{eq:decompose}
 \end{align} 

Let $\overline{q} = \Pr\left(f(X)=0\right)x_0 + \Pr\left(f(X)=1\right)x_1$. Then
\begin{align*}
	\left \langle Q(X),b_i \right \rangle &= \left \langle Q(X) + \overline{q} -\overline{q}, b_i \right \rangle \\
	&= \left \langle Q(X)-\overline{q}, b_i \right \rangle + \left \langle \overline{q},b_i \right \rangle = \left \langle \overline{q},b_i \right \rangle,
\end{align*}
 where the last equality is since $Q(X)-\overline{q}=cq$ for $c \in \RR$ and is orthogonal to $b_i$. Substituting in \eqref{eq:decompose}, 
 \begin{align}
 	&\EE\left[ \left \lVert X \right \rVert^2\right] - \EE\left[ \left \lVert X - Q(X) \right \rVert^2\right] \nonumber \\
 	&= \EE\left[ \left \langle X,q\right \rangle^2 \right] - \EE\left[ \left( \left \langle X,q\right \rangle - \left \langle Q(X),q\right \rangle \right)^2 \right] \nonumber \\
 	&+ \sum\limits_{i=1}^{\infty} \EE\left[ \left \langle X,b_i\right \rangle^2 \right] - \sum\limits_{i=1}^{\infty} \EE\left[ \left( \left \langle X,b_i \right \rangle - \left \langle \overline{q},b_i \right \rangle \right)^2 \right].  \nonumber\\
 	&\leq \EE\left[ \left \langle X,q\right \rangle^2 \right] - \EE\left[ \left( \left \langle X,q\right \rangle - \left \langle Q(X),q\right \rangle \right)^2 \right] \nonumber\\
 	&\leq \vdrop_{\left \langle X,q\right \rangle}. 
 \end{align} 
 Since $Q(\cdot)$ was arbitrary, 
 \[ \vdrop_{X} \leq \sup\limits_{q \in \hil, \left \lVert q \right \rVert =1} \vdrop_{\left \langle X,q \right \rangle}.\]
 
Conversely, take any $q\in \hil$ such that $\left \lVert q \right \rVert =1$. Let $Q(\cdot)$ be a one-bit quantizer on $\RR$ satisfying
\begin{align}
	\EE\left[ \left \langle X,q \right \rangle^2 \right] &- \EE\left[ \left( \left \langle X,q \right \rangle - Q\left( \left \langle X,q\right \rangle \right)\right)^2 \right] \nonumber \\
	&\geq \vdrop_{\left \langle X,q \right \rangle} - \eps. \label{eq:approx}
\end{align}
Construct a one-bit quantizer $Q^*(\cdot)$ on $\hil$ where 
\begin{align*}
	g^*(0) = g(0)q, g^*(1)=g(1)q, 
\end{align*}
and $f^*(x) = f\left( \left \langle x,q\right \rangle \right)$. Then
\begin{align*}
	\vdrop_X \geq \EE\left[ \left \lVert X \right \rVert^2\right] - \EE\left[ \left \lVert X - Q^*(X) \right \rVert^2\right]. 
\end{align*}
Let $\left \lbrace q, b_1, b_2 \cdots \right \rbrace$ be an orthonormal basis in $\hil$. Note that $\left \langle Q^*(x), b_i \right \rangle =0$ for all $i$ and $x$. Using the decomposition in \eqref{eq:decompose}, we have
\begin{align*}
	\vdrop_X &\geq \EE\left[ \left \langle X,q\right \rangle^2 \right] + \sum\limits_{i=1}^{\infty} \EE\left[ \left \langle X,b_i\right \rangle^2 \right] \nonumber\\
	&\hspace{5mm}- \EE\left[ \left \lVert \left \langle X,q\right \rangle q + \sum\limits_{i=1}^{\infty} \left \langle X, b_i \right \rangle b_i \right. \right. \\
	&\hspace{10mm}\left. \left. - \left \langle Q^*(X),q\right \rangle q  \right \rVert^2 \right] \\
	&=\EE\left[ \left \langle X,q\right \rangle^2 \right] - \EE\left[ \left( \left \langle X,q\right \rangle - \left \langle Q^*(X),q\right \rangle \right)^2 \right] \\
	&= \EE\left[ \left \langle X,q\right \rangle^2 \right] - \EE\left[ \left( \left \langle X,q\right \rangle - Q\left(\left \langle X,q\right \rangle\right) \right)^2 \right] \\
	&\geq \vdrop_{\left \langle X,q \right \rangle} - \eps.
\end{align*}
But $\eps$ and $q$ were arbitrary. Therefore, 
\[ \vdrop_X \geq \sup\limits_{q \in \hil, \left \lVert q \right \rVert = 1} \vdrop_{\left \langle X,q \right \rangle}.\]

(b) From \cite{Kieffer83}, we know that the unique optimal one-bit quantizer of a symmetric real-valued random variable with log-concave pdf is symmetric. Therefore, the result follows from (a) and Lemma~\ref{lem:vardrop}.
\end{proof}
Since the optimal direction to project along requires that the product of amenability and variance of the projection be maximum, projecting along the direction of highest variance need not always be optimal. We now look at an example that illustrates this point. 

\noindent \textbf{Example:} Let $\overline{S} = \begin{bmatrix}
	S_1,
	S_2
\end{bmatrix}$ where $S_1$ and $S_2$ are independent Laplace random variables with mean zero and variance 2. We will show that projecting along $\begin{bmatrix}
\frac{1}{\sqrt{2}},
\frac{1}{\sqrt{2}}
\end{bmatrix}$ results in a higher variance drop compared to projecting along either of the coordinate vectors. First note that since $\overline{S}$ is a symmetric, log-concave random vector, Theorem~\ref{thm:amen} holds. Therefore, it is sufficient to prove that $\EE\left[ \left \lvert \frac{S_1 + S_2}{\sqrt{2}} \right \rvert \right] > \EE\left[ \left \lvert S_1\right \rvert \right] =  \EE\left[ \left \lvert S_2\right \rvert \right]$. The pdf of $S_1 + S_2$ is $\frac{1}{4}e^{-\left \lvert z \right \rvert}\left(|z|+1\right)$. Therefore, 

\begin{align*}
	\EE\left[ \left \lvert \frac{S_1 + S_2}{\sqrt{2}} \right \rvert \right] &= \frac{1}{\sqrt{2}} \left[ 2 \int_0^{\infty} z \cdot \frac{(z+1)e^{-z}}{4} dz \right] \\
	&= \frac{3}{2\sqrt{2}} > \EE\left[ \left \lvert S_1\right \rvert \right] = \EE\left[ \left \lvert S_2\right \rvert \right] = 1.
\end{align*}

\section{The Stationary Sawbridge}\label{sec:stsaw}

We now consider an application of the previous setup to find  the optimal one-bit quantizer of the stationary sawbridge. Wagner and Ball{\'e} \cite{WagnerB21} studied the sawbridge process, which is defined as 
\[ \saw_t \eqdef t - \bm{1}\left( t \geq \drop\right)\hspace{5mm} t \in \left[0,1\right], \] 
where $\drop\sim \text{Unif}\left[0,1\right]$. We denote the entire process $\left \lbrace \saw_t \right \rbrace_{t=0}^1$ by $\saw$ and call it the \emph{nonstationary sawbridge} to distinguish it from the \emph{stationary sawbridge} 
\begin{equation} \label{eq:stsawdef}
\stsaw_t \eqdef \saw_{\left(  t+ \phase \right) \text{mod } 1} \hspace{5mm} t \in \left[0,1\right],
\end{equation}
where $\drop, \phase \sim \text{Unif}\left[0,1\right]$ and $\drop \indep \phase$. We denote the entire process $\left \lbrace \stsaw_t \right \rbrace_{t=0}^1$ by $\stsaw$.

Since the stationary sawbridge is a rotation of the nonstationary sawbridge in time, both the processes have the same average value or DC, $\int_0^1 \saw_t dt = \int_0^1 \stsaw_t dt = \drop - 0.5$. For the nonstationary sawbridge, it is known from Corollary 2 in \cite{WagnerB21} that an optimal one-bit quantizer is the sign of the DC. From Theorem~\ref{thm:amen} we know that finding an optimal one-bit quantizer is equivalent to finding an optimal direction to project upon and then quantizing the projection.  It should be noted that the constant function equal to 1 is not the highest variance eigenfunction of $\saw$ providing another instance where projecting along a direction different from the highest variance direction is optimal. As we shall see below, this is not the case for stationary sawbridge. Our main result in this section is that the optimal direction to project upon is the constant function equal to $1$ and therefore, the sign of the DC is an optimal one-bit quantizer for the stationary sawbridge. We now specify the eigenfunctions and eigenvalues of the stationary sawbridge.

\begin{lemma}\label{lem:eig}
	The functions $\eigfunc_{1,t} =1, \eigfunc_{2k,t} = \sqrt{2}\sin \left( 2 \pi k t\right), {\eigfunc_{2k+1,t} = \sqrt{2}\cos \left( 2 \pi k t\right)}$ for ${k\geq 1}$ form an orthonormal basis of $\lebtwo$ and are the eigenfunctions of the stationary sawbridge with eigenvalues $\eigval_1=\frac{1}{12}, \eigval_{2k} = \eigval_{2k+1} = \frac{1}{4\pi^2k^2}$.
\end{lemma}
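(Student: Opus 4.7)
The plan is to exploit stationarity on the circle: because $\stsaw_t$ is shift-invariant modulo $1$, its covariance kernel $\acorr(s,t) \eqdef \EE[\stsaw_s \stsaw_t]$ is of the form $f((s-t) \bmod 1)$ for some real function $f$. Kernels of this form on $[0,1]$ (viewed as the circle) are simultaneously diagonalized by the complex Fourier basis $\{e^{2\pi i k t}\}_{k \in \ZZ}$, with the eigenvalue attached to frequency $k$ equal to $\EE[|\hat{\stsaw}_k|^2]$, where $\hat{\stsaw}_k \eqdef \int_0^1 \stsaw_t\, e^{-2\pi i k t}\, dt$. The orthonormal real basis $\{1,\,\sqrt{2}\cos(2\pi k t),\,\sqrt{2}\sin(2\pi k t)\}_{k\ge 1}$ is standard Fourier theory, and since taking real and imaginary parts preserves eigenvectors, $\eigfunc_{2k}$ and $\eigfunc_{2k+1}$ will share the eigenvalue assigned to $e^{\pm 2\pi i k t}$.

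First I would reduce $\hat{\stsaw}_k$ to $\hat{\saw}_k$. Using $\stsaw_t = \saw_{(t+\phase)\bmod 1}$ and substituting $s = (t+\phase) \bmod 1$, periodicity of $e^{-2\pi i k t}$ for integer $k$ lets one absorb the two pieces of the range into a single integral over $[0,1]$, yielding $\hat{\stsaw}_k = e^{2\pi i k \phase}\,\hat{\saw}_k$. In particular $|\hat{\stsaw}_k|^2 = |\hat{\saw}_k|^2$, so $\phase$ disappears from the eigenvalue computation.

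Next I would compute $\hat{\saw}_k$ for $k \neq 0$ directly from $\saw_t = t - \bm{1}(t \geq \drop)$: integration by parts on $\int_0^1 t\, e^{-2\pi i k t}\, dt$ gives $-1/(2\pi i k)$, and $\int_\drop^1 e^{-2\pi i k t}\, dt = (1 - e^{-2\pi i k \drop})/(-2\pi i k)$, so
\[ \hat{\saw}_k = -\frac{e^{-2\pi i k \drop}}{2\pi i k}, \qquad |\hat{\saw}_k|^2 = \frac{1}{4\pi^2 k^2}. \]
This is deterministic in $\drop$, so $\EE[|\hat{\stsaw}_k|^2] = 1/(4\pi^2 k^2)$, which is precisely the claimed eigenvalue shared by $\eigfunc_{2k} = \sqrt{2}\sin(2\pi k t)$ and $\eigfunc_{2k+1} = \sqrt{2}\cos(2\pi k t)$. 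For $k=0$, the same substitution gives $\hat{\stsaw}_0 = \int_0^1 \saw_s\, ds = \drop - 1/2$, hence $\EE[\hat{\stsaw}_0^2] = \int_0^1 (u-1/2)^2\, du = 1/12$, the eigenvalue for $\eigfunc_1 \equiv 1$.

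The only real subtlety is justifying the spectral representation on the circle — namely that $\EE[|\hat{\stsaw}_k|^2]$ is the eigenvalue of the integral operator with kernel $\acorr$ acting on $e^{2\pi i k t}$. This is a routine computation once one verifies that $\acorr(s,t)$ depends on $s,t$ only through $(s-t)\bmod 1$ (which follows from the definition of $\stsaw$ and the uniformity of $\phase$), after which Fubini converts $\int_0^1\!\int_0^1 \acorr(s,t) e^{2\pi i k (s-t)}\,ds\,dt$ into $\int_0^1 f(u) e^{2\pi i k u}\,du$, matching the definition of the eigenvalue. Completeness of the computed eigenbasis (no other eigenfunctions) is immediate from completeness of the Fourier basis in $\lebtwo$.
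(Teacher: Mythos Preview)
Your proof is correct and takes a genuinely different route from the paper's. The paper first computes the autocovariance kernel explicitly as
\[
\acorr(s,t)=\frac{(s-t)^2}{2}-\frac{|s-t|}{2}+\frac{1}{6}
\]
by expanding $\EE\bigl[(\rphase_s-\bm{1}(\rphase_s\ge\drop))(\rphase_t-\bm{1}(\rphase_t\ge\drop))\bigr]$ with $\rphase_t=(t+\phase)\bmod 1$, then sets up the integral eigenvalue equation $\int_0^1 \acorr(s,t)\eigfunc_{k,t}\,dt=\eigval_k\eigfunc_{k,s}$, differentiates in $s$ to obtain an ODE, and reads off the trigonometric eigenfunctions and eigenvalues from its solutions. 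You bypass the explicit kernel entirely: circular stationarity (from the uniformity of $\phase$) forces the Fourier exponentials to be eigenfunctions \emph{a priori}, and you compute the eigenvalues as $\EE[|\hat{\stsaw}_k|^2]$ directly from the Fourier coefficients of sample paths, observing via $\hat{\stsaw}_k=e^{2\pi i k\phase}\hat{\saw}_k$ that $|\hat{\saw}_k|=1/(2\pi|k|)$ is actually \emph{deterministic} for $k\ne 0$. Your route is cleaner here---no kernel computation, no ODE---while the paper's approach has the incidental benefit of producing a closed-form expression for $\acorr(s,t)$.
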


\ifdefined \EXTENDED The proof of Lemma~\ref{lem:eig} is in section~\ref{subsec:proofs}. \else 
In what follows we omit the proofs of Lemmas~~\ref{lem:eig},\ref{lem:proj}, \ref{lem:smalltheta}, \ref{lem:bigtheta} due to space constraints. Complete proofs can be found in the extended version \cite{BhadaneW22}. \fi

%
%

\begin{theorem} \label{thm:stsaw}
	Let $f^*: \lebtwo \mapsto \left \lbrace 0,1 \right \rbrace$ be defined as $f^*(\stsaw)=1$ if $\int_0^1 \stsaw_t dt > 0$ and $f^*(\stsaw)=0$ otherwise.  Define $g^* : \left \lbrace 0,1\right \rbrace \mapsto \lebtwo$ as $g^*(0) = -0.25, g^*(1) = 0.25$.  Then $g^* \circ f^*$ is an optimal one-bit quantizer of $\stsaw$. 
\end{theorem}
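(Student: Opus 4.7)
By Theorem~\ref{thm:amen}(a), we have $\vdrop_{\stsaw} = \sup_{q \in \lebtwo,\,\|q\|=1} \vdrop_{\langle \stsaw, q\rangle}$, so it suffices to show (i) $\vdrop_{\langle \stsaw, \eigfunc_1\rangle} = 1/16$ and (ii) $\vdrop_{\langle \stsaw, q\rangle} \leq 1/16$ for every unit $q$. For (i), a direct computation using the change of variables $s = (t+\phase) \bmod 1$ gives $\langle \stsaw, \eigfunc_1\rangle = \int_0^1 \stsaw_t\,dt = \drop - 1/2$, which is uniform on $[-1/2,1/2]$. Since the uniform density is symmetric and log-concave, Lemma~\ref{lem:vardrop}(2) together with Table~\ref{table:amen} yields $\vdrop = \frac{3}{4} \cdot \frac{1}{12} = \frac{1}{16}$, achieved by the threshold $0$ and reconstructions $\pm 1/4$. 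This shows $g^* \circ f^*$ attains $1/16$.

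For the upper bound (ii), I would first establish a structural projection lemma (the role of Lemma~\ref{lem:proj}) giving, for any unit $q = \alpha \eigfunc_1 + \sum_{k\geq 1}(a_k \eigfunc_{2k} + b_k \eigfunc_{2k+1})$, the decomposition
\[
\langle \stsaw, q\rangle \;=\; \alpha\bigl(\drop - \tfrac{1}{2}\bigr) + \rphase(\Theta),
\]
where $\Theta \eqdef (\drop - \phase) \bmod 1$ is uniform on $[0,1]$ and independent of $\drop$, and $\rphase(\theta) = \sum_{k\geq 1} \frac{\sqrt{2}}{2\pi k}\bigl[-a_k \cos(2\pi k\theta) + b_k \sin(2\pi k\theta)\bigr]$. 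The key point is that the two summands are independent, and $\Var(\rphase) = \sum_k (a_k^2+b_k^2)/(4\pi^2 k^2) \leq (1-\alpha^2)/(4\pi^2)$.

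I would then split the analysis into two regimes according to $\beta \eqdef \|q - \alpha\eigfunc_1\| = \sqrt{1-\alpha^2}$. In the bigtheta regime (the role of Lemma~\ref{lem:bigtheta}) where $\beta$ is large enough that $\alpha^2/12 + \beta^2/(4\pi^2) \leq 1/16$, i.e.\ $\beta^2 \geq \frac{1/12 - 1/16}{1/12 - 1/(4\pi^2)}$, the trivial bound $\vdrop \leq \Var$ already gives the result, since $1/(4\pi^2) < 1/16 < 1/12$. In the smalltheta regime (the role of Lemma~\ref{lem:smalltheta}), $q$ is close to $\eigfunc_1$; here I would apply Lemma~\ref{lem:vardrop}(1) and, using the independence of $\drop$ and $\Theta$, integrate over $\Theta$ first: for each threshold $\vthresh$ and each value $\rphase(\Theta) = r$, the conditional distribution of $\langle\stsaw,q\rangle$ is uniform on $[r-\alpha/2, r+\alpha/2]$, for which the relevant quantities in~\eqref{eq:vardrop_nonopti} can be written in closed form. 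A convexity/perturbation estimate should then show that introducing the independent perturbation $\rphase$ cannot increase the variance drop of the dominant uniform component $\alpha(\drop - 1/2)$ above $\alpha^2/16 + O(\Var(\rphase)) \leq 1/16$.

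The main obstacle is the smalltheta case. Because $\rphase(\Theta)$ is generally neither symmetric nor log-concave (sums of sines and cosines at multiple frequencies are not symmetric about $0$ in distribution), we cannot appeal to Lemma~\ref{lem:vardrop}(2) to reduce $\vdrop$ to amenability times variance. Controlling the supremum over $\vthresh$ in~\eqref{eq:vardrop_nonopti} uniformly in the coefficients $(a_k,b_k)$ is the technical heart of the argument, and will rely on the fact that the uniform quantizer of $\alpha(\drop - 1/2)$ is already optimal at threshold $0$ and the independence of $\drop$ and $\Theta$ allows the perturbation to be averaged out.
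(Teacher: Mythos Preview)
Your skeleton is essentially the paper's: reduce via Theorem~\ref{thm:amen}(a), decompose $\langle \stsaw,q\rangle$ as (scaled DC) $+$ (independent AC), dispatch the case where $q$ has a large AC component by the trivial bound $\vdrop\le\Var$, and work harder when $q$ is close to the constant function. Two points, one cosmetic and one substantive.

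First, you have the labels reversed. In the paper, $\theta\eqdef(\int_0^1 q_t\,dt)^2=\alpha^2$, so ``small $\theta$'' means large $\beta$ (your variance case), handled by Lemma~\ref{lem:smalltheta}, while ``big $\theta$'' means $q$ close to $\eigfunc_1$, handled by Lemma~\ref{lem:bigtheta}. Your assignment of roles is the opposite.

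Second, and more importantly, your treatment of the hard case ($\theta$ close to $1$) is only a hope, not a proof. You write that ``a convexity/perturbation estimate should then show'' $\vdrop\le\alpha^2/16+O(\Var(\rphase))$, but you give no mechanism, and this inequality is not obviously true: the variance drop is a highly nonlinear supremum over thresholds, and an independent additive perturbation can in principle increase it (it does not simply average out). The paper's argument is quite different and does not attempt a perturbation expansion. Instead it exploits the \emph{support} structure: since $\sqrt{1-\theta}\,\projAC$ is a.s.\ in some interval $[-b,c]$ with $b\le\sqrt{(1-\theta)/12}$, the density of $\proj$ equals the flat value $1/(2a)$ on the central interval $[-(a-c),a-b]$, and one checks directly that the median of $\proj$ is $0$. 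The paper then replaces $\projAC$ by the \emph{extremal} two-point law on $\{-b,c\}$ with mean zero, calls the resulting variable $\modproj$, and shows two things: (i) $\vdrop_{\modproj}=\bigl((a^2+bc)/(2a)\bigr)^2\le 1/16$ by an explicit calculation (the optimal threshold for $\modproj$ is $0$); and (ii) $\vdrop_{\proj}\le\vdrop_{\modproj}$, which follows from the Jensen-type bound $\EE[\projAC^2]\le bc$ for any mean-zero law on $[-b,c]$. Nothing like your $O(\Var(\rphase))$ estimate enters; the crucial inputs are the a.s.\ bound $|\projAC|\le 1/\sqrt{12}$ and the second-moment bound $\EE[\projAC^2]\le bc$, not the variance $\Var(\projAC)\le 1/(4\pi^2)$ that you use only in the easy case. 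As written, your proposal leaves the main inequality unproved.
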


 \begin{proof}

 From Theorem~\ref{thm:amen}, we know that \[\vdrop_{\stsaw} = \sup\limits_{q \in \lebtwo, \left \lVert q \right \rVert = 1}\vdrop_{\int_{0}^{1}  q_t \stsaw_t dt   }  \]
 
 Therefore, finding the unit norm function $q$ that maximizes the variance drop of the projection is sufficient to obtain an optimal one-bit quantizer of $\stsaw$. Define the projection of $\stsaw_t$ on $q_t$ as $\proj \eqdef \int_0^1 q_t \stsaw_t dt$. Then for $\thresh \in \RR$, an optimal decision rule for quantizing $\proj$ can be written as  
 \[ \proj \stackrel[1]{0}{\lessgtr} \thresh. \]
We prove that $q^{*}_t=1$ is optimal and that the quantizer for this choice is symmetric, $\thresh^*=0$. 
\ifdefined \EXTENDED The proofs of Lemmas~\ref{lem:proj}, \ref{lem:smalltheta}, \ref{lem:bigtheta} are in section~\ref{subsec:proofs}. \else  \fi

 \begin{lemma}\label{lem:proj}
 	For a unit norm $q$, define $\proj \eqdef \int_0^1 q_t\stsaw_t dt$. Let $\theta \eqdef \left(\int_0^1 q_t dt\right)^2$. Then, 
 	\begin{enumerate}
 		\item  $\proj = \sqrt{\theta} \projDC + \sqrt{1-\theta} \projAC$, where $\projDC \eqdef \text{sgn}\left(\int_0^1 q_t dt \right)\int_0^1 \stsaw_t dt$ and $\projAC \eqdef \int_0^1 g_t \stsaw_t dt$ where $g_t$ is unit norm and $\int_0^1 g_tdt = 0$.
 		\item $\projAC$ and $\projDC$ are independent. 
 	\end{enumerate}
 \end{lemma}

	Since $q$ is arbitrary, it suffices to show that $\vdrop_{\projDC} = \max_{\theta \in \left[0,1\right]} \vdrop_{\proj}$. Consider two cases a) $\theta \leq \frac{5}{8}$ and b) $\frac{5}{8} < \theta < 1$. The following lemma proves that the optimal $\theta$ cannot be smaller than $\frac{5}{8}$.

\begin{lemma} \label{lem:smalltheta}
	If $\theta \leq \frac{5}{8}$, $\vdrop_{\proj} \leq \Var_{\proj} < \vdrop_{\projDC}$.
\end{lemma}

For large $\theta$, a variance argument like before does not work because the variance of the DC is high. We use the structure of the probability density function of $\proj$, $f_{\proj}$, to show that the optimal quantizer of $\proj$ is symmetric. 

Let the support of $\sqrt{\theta} \projDC$ be $\left[ -a, a\right]$, and that of $\sqrt{1-\theta} \projAC$ be $\left[ -b,c\right]$ where $c \leq b$ without loss of generality. Note that for $\theta > \frac{5}{8}$, $a>\frac{\sqrt{5}}{4\sqrt{2}}$ and $b < \frac{1}{4\sqrt{2}}$. Also, the support of $\proj$ is $\left[ -\left(a+b\right) , a+c \right]$ with $f_{\proj}(z) = \frac{1}{2a}$ for $z \in \left[ -\left(a-c \right), a-b \right]$. Note that $\frac{a}{2} < a-b$. 

We now construct a random variable $\widetilde{\proj} = \sqrt{\theta} \projDC + \sqrt{1-\theta} \widetilde{\projAC}$, where $\sqrt{1-\theta}\widetilde{\projAC} = -b$ with probability $\frac{c}{c+b}$ and $c$ with probability $\frac{b}{c+b}$. We show that $\vdrop_{\projDC} \geq \vdrop_{\widetilde{\proj}} \geq \vdrop_{\proj}$ with equality holding for $\theta=1$. 

\begin{lemma}\label{lem:bigtheta}
	For $\theta > \frac{5}{8}$, $\vdrop_{\projDC} \geq \vdrop_{\widetilde{\proj}} \geq \vdrop_{\proj}$, where equality holds for $\theta=1$.
\end{lemma}

Therefore, the optimal direction to quantize is $q^*_t=1$ and the optimal quantizer of the projection is symmetric because the uniform distribution is log-concave. This corresponds to the encoder $f^*(\stsaw) = 1$ if $\projDC > 0$ and $f^*(\stsaw)=0$ otherwise. By the Lloyd-Max conditions, the reconstructions are given by $g^*(1) = \EE\left[ \stsaw_t \mid f^*(\stsaw)=1 \right] = 0.25$ and $g^*(0) = \EE\left[ \stsaw_t \mid f^*(\stsaw)=0 \right] = -0.25$.

\end{proof}

\ifdefined\EXTENDED
 
 \subsection{Proofs of Lemmas }\label{subsec:proofs}
 We list the proofs of unproven lemmas here. 
 \begin{proof}[Proof of Lemma~\ref{lem:eig}]
 
 Define $\rphase_t \eqdef \left( t + \phase \right) \text{mod} 1$. The autocorrelation of $\stsaw_t$ is
 \begin{align*}
 	\acorr(s,t) &= \EE\left[ \stsaw_s \stsaw_t \right] \\
 	&= \EE\left[ \left( \rphase_s - \bm{1}\left( \rphase_s \geq \drop \right) \right) \left( \rphase_t - \bm{1}\left( \rphase_t \geq \drop \right) \right) \right] \\
 	&= \EE\left[ \rphase_s \rphase_t \right] + \EE\left[ \bm{1}\left( \min\left( \rphase_s, \rphase_t \right) \geq \drop \right) \right] \\
 	&  \hspace{5mm}- \EE\left[ \rphase_s \bm{1} \left( \rphase_t \geq \drop \right) \right] - \EE\left[ \rphase_t \bm{1} \left( \rphase_s \geq \drop \right) \right] \\
 	&= \frac{\left(s-t\right)^2}{2} - \frac{|s-t|}{2} + \frac{1}{6}. 
 \end{align*}
 If $\{ \eigfunc_{k,t} \}_{k=1}^{\infty}$ and $\{ \eigval_k \}_{k=1}^{\infty}$ are the eigenfunctions and eigenvalues of $\acorr$, then for all $k$ and $s\in \left[0,1\right]$,
 \[ \int_0^1 \acorr(s,t) \eigfunc_{k,t} dt =\eigval_k \eigfunc_{k,s}. \]
 By differentiating both sides w.r.t $s$ and solving the resultant differential equation, it can be shown that the eigenfunctions are $\eigfunc_{1,t} =1, \eigfunc_{2k,t} = \sqrt{2}\sin \left( 2 \pi k t\right), \eigfunc_{2k+1,t} = \sqrt{2}\cos \left( 2 \pi k t\right)$ for $k\geq 1$. The corresponding eigenvalues are $\eigval_1=\frac{1}{12}, \eigval_{2k} = \eigval_{2k+1} = \frac{1}{4\pi^2k^2}$.
\end{proof}
 \begin{proof}[Proof  of Lemma~\ref{lem:proj}]
 	Since $q$ is unit norm, $\theta \in \left[0,1\right]$.  We can decompose $q_t$ into its DC and AC,
 	\begin{equation}  \label{eq:dcac}
 		q_t = \text{sgn}\left( \int_0^1 q_t dt \right) \sqrt{\theta} + \sqrt{1-\theta} g_t, 
 	\end{equation}
 	where $g_t$ is unit norm and because of orthogonality, $\int_0^1 g_t dt = 0$. Therefore
 	\[ \proj = \sqrt{\theta} \projDC + \sqrt{1-\theta} \projAC. \]
 	
 	The nonstationary sawbridge can be written as 
 	\[ \saw_t = \left( t - \sawDC - \frac{1}{2} \right)\text{mod} 1 - \frac{1}{2}+ \sawDC \]
 	where $\sawDC \sim \text{Unif}\left[-0.5,0.5\right]$. Thus
 	\begin{align*}
		\stsaw_t &= \left( \left(t + \phase\right)\text{mod } 1 - \sawDC - \frac{1}{2} \right)\text{mod } 1  - \frac{1}{2} + \sawDC \\
		&= \left(t+\phase -\sawDC -\frac{1}{2}\right) \text{mod } 1 - \frac{1}{2} + \sawDC\\
		&= \left( t+ \left(\phase - \sawDC \right) \text{mod } 1 - \frac{1}{2} \right) \text{mod } 1- \frac{1}{2} + \sawDC.
 	\end{align*}
 Since $\left(\phase - \sawDC\right) \text{mod } 1$ is independent of $\sawDC$ and since $\projDC$ depends only on $\sawDC$ and $\projAC$ depends only on  $\left(\phase - \sawDC\right) \text{mod } 1$, $\projDC$ and $\projAC$ are independent. 
 	
%
 	
 \end{proof}

\begin{proof}[Proof of Lemma~\ref{lem:smalltheta}]
	$\var{\projDC}=\frac{1}{12}$. By the Karhunen-Lo\`{e}ve theorem, we can express $\stsaw$ as 
	 	\begin{equation}\label{eq:klstsaw}
		 		\stsaw_t = \coeff_1 \eigfunc_{1,t} + \sum\limits_{k=2}^{\infty} \coeff_k \eigfunc_{k,t},
		 	\end{equation}
	 	where $\{ \eigfunc_{k,t} \}_{k=1}^{\infty}$ are eigenfunctions of $\acorr$, and $\coeff_k \eqdef \int_0^1 \stsaw_t \eigfunc_{k,t} dt$ for $k \geq 1$. By Lemma \ref{lem:eig}, since $\{ \eigfunc_{k,t} \}_{k=1}^{\infty}$ is an orthonormal basis for $\lebtwo$, for $\{ \gcoeff_k \}_{k=1}^{\infty} \in \RR$, we can represent $g$ as 
	 	\begin{equation}
		 		g_t = \gcoeff_1 \eigfunc_{1,t} + \sum\limits_{k=2}^{\infty} \gcoeff_k \eigfunc_{k,t}. 
		 	\end{equation}
	 	Since $\int_0^1 g_t dt = 0$ and $\eigfunc_{k,t}$ is orthogonal to $\eigfunc_{1,t}=1$ for $k\geq 2$, $\gcoeff_1=0$. This implies 
	 	
	 	\begin{equation}
		 		\projAC = \int_0^1 g_t \stsaw_t dt = \sum\limits_{k=2}^{\infty} \coeff_k \gcoeff_k. 
		 	\end{equation}
	Since $\projAC = \sum\limits_{k=2}^{\infty} \coeff_k \gcoeff_k$, 
	\[ \var{\projAC} = \sum\limits_{k=2}^{\infty} \var{\coeff_k} \gcoeff_k^2 = \sum\limits_{k=2}^{\infty} \eigval_k \gcoeff_k^2. \] 
	Further, since $g$ is unit norm, $\sum\limits_{k=2}^{\infty} \gcoeff^2_k = 1$. Therefore, 
	\[ \var{\projAC} \leq  \max_{k \geq 2} \eigval_k  = \frac{1}{4\pi^2}. \] 
	For $\sawDC = \drop - 0.5$, 
	\begin{align*}
		\projAC &= \int_0^1 g_t \stsaw_t dt \\
		&= \int_0^1 g_t \left(\stsaw_t - \sawDC + \sawDC \right) dt \\
		&= \int_0^1 g_t \left( \stsaw_t - \sawDC \right) dt \\
		&\leq \sqrt{\int_0^1 \left( \stsaw_t -\sawDC \right)^2 dt } \leq \frac{1}{\sqrt{12}}. 
	\end{align*}
	Therefore, $\projAC$ lies within $\left[ -\sqrt{\frac{1}{12}}, \sqrt{\frac{1}{12}}\right]$ almost surely.
	
	For $\theta \leq  \frac{5}{8}$,
	\begin{align*}
		\var{\proj} &= \theta \var{\projDC} + \left(1-\theta\right) \var{\projAC} \\
		&\leq \frac{1}{4\pi^2} + \frac{5}{8}\left( \frac{1}{12} - \frac{1}{4\pi^2} \right) \leq \frac{1}{16}.
	\end{align*}
\end{proof}

%
%

\begin{proof}[Proof of Lemma~\ref{lem:bigtheta}]

We first prove that for both $\proj$ and $\modproj$ the median is $0$. 

\begin{align}
	\Pr\left( \proj \geq 0 \right) &= \int_0^{a+c} f_{\proj}(z)dz \nonumber \\
	&= \int_0^{a-b} \frac{1}{2a} dz + \int_{a-b}^{a+c} f_{\proj}(z)dz \nonumber\\
	&= \frac{1}{2} - \frac{b}{2a} + \int_{a-b}^{a+c} f_{\proj}(z)dz, \label{eq:median_1}
\end{align} 
 where $f_{\proj}$ is the pdf of $\proj$. Since $\proj$ is the sum of independent random variables, $f_{\proj}$ can be written as a convolution of $f_{ \sqrt{\theta} \projDC}$ and $f_{\sqrt{1-\theta}\projAC}$. For simplicity of notation we denote the pdf of $\sqrt{1-\theta}\projAC$ as $f_{AC}$ and denote its cumulative distribution function (cdf) as $F_{AC}$.  
 \begin{align}
 	\int_{a-b}^{a+c} f_{\proj}(z)dz &= \int_{a-b}^{a+c} \left(\int_{a}^{z-c} \frac{1}{2a} f_{AC} \left(z - \tau \right) d\tau \right) dz \nonumber \\
 	&=\frac{1}{2a}\int_{a-b}^{a+c} \left(\int_{a}^{z-c} f_{AC} \left(\gamma \right) d\gamma \right) dz \nonumber \\
 	&= \frac{1}{2a} \int_{a-b}^{a+c} 1 - F_{AC}\left(z-a\right) dz \nonumber \\
 	&= \frac{b}{2a}, \label{eq:median_2}
 \end{align}
where in the last equality we use the identity $\int_{\ell}^{u} F(x) dx = u - \EE\left[X\right]$ for a random variable $X$ with cdf $F$ whose support is $\left[\ell,u\right]$ where $\ell, u \in \RR$. Substituting \eqref{eq:median_1} in \eqref{eq:median_2}, we get $\Pr\left(\proj \geq 0 \right) = \Pr\left( \proj<0\right) = \frac{1}{2}$. Note that for the proof above we only require that $\sqrt{1-\theta} \projAC$ is supported on $\left[ -b,c\right]$ and its mean is $0$. Therefore, $\Pr\left(\modproj \geq 0 \right) = \Pr\left( \modproj<0\right) = \frac{1}{2}$. 

We now compute $\vdrop_{\modproj}$ using Lemma~\ref{lem:vardrop}. The optimal $\vthresh$ in \eqref{eq:vardrop_nonopti} lies in $\left[ -\frac{a}{2}, \frac{a}{2}\right]$. For $\vthresh \in \left[-\frac{a}{2}, \frac{a}{2}\right]$, $\Pr\left( \modproj \geq \vthresh \right) = \frac{1}{2} -\frac{\vthresh}{2a}$ and 
\begin{align*}
	\EE\left[ \modproj \mid \modproj \geq \vthresh \right] &= \frac{1}{\frac{1}{2} - \frac{\vthresh}{2a}}\left[ \int_{\vthresh}^{a-b} \frac{z}{2a} dz \right. \\
	& \left. \hspace{5mm}+ \int_{a-b}^{a+c} z \frac{1}{2a}\frac{b}{\left(c+b\right)} \right] \\
	&= \frac{1}{4a\left(\frac{1}{2}-\frac{\vthresh}{2a} \right) } \left( a^2 + bc -\vthresh^2 \right). 
\end{align*}
 
 It can be shown that 
 \begin{align*}
 	 &\argmax\limits_{\vthresh \in \left[ -\frac{a}{2}, \frac{a}{2} \right]} \EE\left[ \modproj \mid \modproj \geq \vthresh \right]^2 \frac{\Pr\left( \modproj \geq \vthresh\right)}{\Pr\left(\modproj<\vthresh\right)} \\
 	&= \argmax\limits_{\vthresh \in \left[ -\frac{a}{2}, \frac{a}{2} \right]} \frac{\left( \frac{a^2 + bc - \vthresh^2}{2a} \right)^2}{1 - \frac{\vthresh^2}{a^2}}  =0.\\
 \end{align*}
Therefore, 
\begin{equation}\label{eq:vdrop_modproj}
	\vdrop_{\modproj} = \left( \frac{a^2  +bc}{2a} \right)^2 \leq \vdrop_{\projDC} =\frac{1}{16}
\end{equation}
for $a = \frac{\sqrt{\theta}}{2}$ and $b = \frac{\sqrt{1-\theta}}{\sqrt{12}}$. Equality holds for $\theta=1$. 

We now show that $\vdrop_{\proj} \leq \vdrop_{\modproj}$. We again note that for the optimal quantizer of $\proj$, $\vthresh \in \left[ -\frac{a}{2}, \frac{a}{2} \right]$. Therefore, since $\Pr\left(\proj\geq 0\right) = \Pr\left(\proj < 0 \right) = \frac{1}{2}$, 

\begin{equation}\label{eq:vdrop_proj}
\EE\left[ \proj \mid \proj \geq \vthresh \right]^2 \frac{\Pr\left( \proj \geq \vthresh\right)}{\Pr\left(\proj<\vthresh\right)} = \frac{\left(\int_{\vthresh}^{a+c}  z f_{\proj}\left(z\right) dz \right)^2}{\frac{1}{4}- \frac{\vthresh^2}{4a^2}}. 
\end{equation}

\begin{equation}\label{eq:vdropproj_1}
	\int_{m}^{a+c} z f_{\proj}\left(z\right) dz = \int_{\vthresh}^{a-b} \frac{z}{2a} dz + \int_{a-b}^{a+c} z f_{\proj}\left(z\right) dz. 
\end{equation}
Note that from \eqref{eq:median_2}, 
\begin{align}
	\int_{a-b}^{a+c} z f_{\proj}\left(z\right) dz &= \int_{a-b}^{a+c} \frac{z}{2a} \left( 1-F_{AC}\left(z-a\right) \right) dz \nonumber \\
	&= \frac{1}{2a}\int_{-b}^{c}  \left(a+\tau \right) \left( 1-F_{AC}\left(\tau\right) \right) d\tau. \label{eq:exp_cdf}
\end{align} 

Integrating by parts, we have 
\begin{align}
    \nonumber
    &	\int_{-b}^{c}  \left(a+\tau \right)  \left( 1-F_{AC}\left(\tau\right) \right) d\tau  \\
    & = ab-\frac{b^2}{2} +\int_{-b}^{c} \left(a\tau + \frac{\tau^2}{2} \right) f_{AC}\left(\tau\right) d\tau \nonumber \\
	&= ab - \frac{b^2}{2} + \frac{\int_{-b}^{c} \tau^2 f_{AC}(\tau) d\tau}{2}. \label{eq:byparts}
\end{align}
 We now prove that the last term is bounded by $\frac{bc}{2}$. 
 Since $\tau^2$ is convex, by Jensen's inequality we have 
 \[ \tau^2 \leq b^2 \left(1-\frac{\tau+b}{b+c}\right) + c^2\left(\frac{\tau+b}{b+c}\right) \]
 Thus we have
 \begin{equation} \label{eq:varub}
 	\frac{\int_{-b}^{c} \tau^2 f_{AC}(\tau) d\tau}{2} \leq \frac{b^2 \frac{c}{b+c} + c^2\frac{b}{b+c}}{2} = \frac{bc}{2}
 \end{equation}
where we use the fact that the mean of $\sqrt{1-\theta}\projAC$ is $0$.  Substituting \eqref{eq:varub} in \eqref{eq:byparts}, 
 
 \begin{equation}\label{eq:part1}
 		\int_{-b}^{c}  \left(a+\tau \right)  \left( 1-F_{AC}\left(\tau\right) \right) \leq ab - \frac{b^2}{2} + \frac{bc}{2}.
 \end{equation}
 
 Substituting \eqref{eq:part1} in \eqref{eq:exp_cdf},
 
 \begin{align*}
 		\int_{a-b}^{a+c} z f_{\proj}\left(z\right) dz &\leq ab - \frac{b^2}{2} + \frac{bc}{2} \\
 		&= \int_{a-b}^{a+c} z \frac{b}{b+c} dz. 
 \end{align*}

 Therefore, 
 \begin{align}
 	\int_{\vthresh}^{a+c} z f_{\proj} \left(z\right) dz &\leq \int_{\vthresh}^{a-b} \frac{z}{2a}dz +  \int_{a-b}^{a+c} \frac{z}{2a} \cdot \frac{b}{b+c} dz \nonumber \\
 	&= \int_{\vthresh}^{a+c} z f_{\modproj}\left(z\right)dz. \label{eq:equivmodproj}
 \end{align} 
Substituting \eqref{eq:equivmodproj} in \eqref{eq:vdrop_proj}, 
\begin{align*}
	&\EE\left[ \proj \mid \proj \geq \vthresh \right]^2 \frac{\Pr\left( \proj \geq \vthresh\right)}{\Pr\left(\proj<\vthresh\right)} \\ 
	& \hspace{5mm}\leq \EE\left[ \modproj \mid \modproj \geq \vthresh \right]^2 \frac{\Pr\left( \modproj \geq \vthresh\right)}{\Pr\left(\modproj<\vthresh\right)}. 
\end{align*}
Therefore, \[ \vdrop_{\proj} \leq \vdrop_{\modproj} \leq \vdrop_{\projDC}.\]
\end{proof}
\else 

\fi 

\section{Numerical Results}\label{sec:nn}
We experimentally verify that the optimal one-bit quantizer of the stationary sawbridge is found by neural-network-based variable-rate compressors trained using stochastic gradient descent (SGD).  A neural-network-based compressor consists of an encoder-decoder pair and a factorized entropy model for entropy coding of the latent components. All three components are implemented using fully connected neural networks as in \cite{WagnerB21} and are trained using the nonlinear transform coding approach in \cite{BalleCMSJAHT20}.
A single realization of the stationary sawbridge is a vector of 1024 equally spaced points between $0$ and $1$. At train time, this vector is passed through the encoder and the output of the encoder is quantized using a differentiable approximation of rounding by soft-rounding and adding uniform noise \cite{AgustssonT20}. The soft-quantized latents are then fed to the decoder to obtain the reconstruction. At test time, the latents are quantized by rounding to the nearest integer. The objective function is the rate-distortion Lagrangian where the rate is computed by the entropy model, and the distortion is the mean-squared error between the inputs and the reconstructions. The encoder, decoder and the entropy model are trained using SGD until convergence. 


%

\begin{figure}[hbt]
	\centering 
	\includegraphics[width=0.95\linewidth]{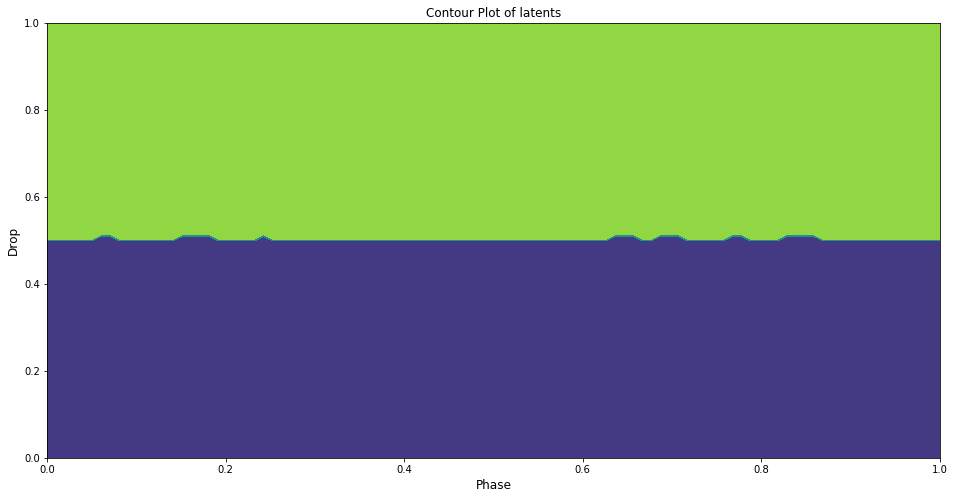}
	\caption{Contour plot for stationary sawbridge.}\label{fig:contour}
\end{figure}

%

 Fig~\ref{fig:contour} is a contour plot of the quantized latent as we vary the drop and phase parameter corresponding to variables $\drop$ and $\phase$ in \eqref{eq:stsawdef}. Note that the quantized latents are the quantized encoder outputs that are then fed to the decoder. Each of the two shaded regions of Fig~\ref{fig:contour} correspond to a single quantized latent vector that differ only in a single latent component. Since the regions correspond to whether the drop is greater than 0.5 or not, neural-network-based compressors trained using SGD converge to an optimal one-bit quantizer. 
\section*{Acknowledgment}
The second author wishes to thank Johannes Ball\'{e}
for helpful discussions.

\bibliographystyle{IEEEtran}
\bibliography{IEEEabrv,biblio}

\end{document}